\newcommand{\problemdef}[3]{
	\begin{center}
		\begin{boxedminipage}{.99\textwidth}
			\textsc{{#1}}\\[2pt]
			\begin{tabular}{ r p{0.8\textwidth}}
				\textit{~~~~Instance:} & {#2}\\
				\textit{Question:} & {#3}
			\end{tabular}
		\end{boxedminipage}
	\end{center}
}
\newcommand{\NP}{{\sf NP}}
\title{Independent Feedback Vertex Sets for\\ Graphs of Bounded Diameter\thanks{This paper received support from EPSRC (EP/K025090/1), London Mathematical Society (41536), the Leverhulme Trust (RPG-2016-258) and Fondation Sciences Math\'ematiques de Paris.
The hardness result (Theorem~\ref{thm:diam3}) of this paper has been announced in an extended abstract of the Proceedings of MFCS 2017~\cite{BDFJP17}.}}
\author{Marthe Bonamy\inst{1} \and
Konrad K. Dabrowski\inst{2} \and
Carl Feghali\inst{3}
\and\\ Matthew Johnson\inst{2}
\and Dani\"el Paulusma\inst{2}}
\institute{
CNRS, LaBRI, France \texttt{marthe.bonamy@u-bordeaux.fr},
\and
School of Engineering and Computing Sciences, Durham University, UK\\
\texttt{\{konrad.dabrowski,matthew.johnson2,daniel.paulusma\}@durham.ac.uk},
\and
IRIF \& Universit\'e Paris Diderot, Paris France \texttt{feghali@irif.fr}}
\newcommand\displaycase[1]{{\em #1}}
\newcounter{ctrclaim}[theorem]
\newcommand{\clm}[1]{\medskip\phantomsection\refstepcounter{ctrclaim}\noindent\displaycase{Claim \thectrclaim. }{\em #1}\\}
\begin{document}
\maketitle

\begin{abstract}
The {\sc Near-Bipartiteness} problem is that of deciding whether or not the vertices of a graph can be partitioned into sets~$A$ and~$B$, where~$A$ is an independent set and~$B$ induces a forest.
The set~$A$ in such a partition is said to be an independent feedback vertex set.
Yang and Yuan proved that {\sc Near-Bipartiteness} is polynomial-time solvable for graphs of diameter~$2$ and \NP-complete for graphs of diameter~$4$.
We show that {\sc Near-Bipartiteness} is \NP-complete for graphs of diameter~$3$, resolving their open problem.
We also generalise their result for diameter~$2$ by proving that even the problem of computing a minimum independent feedback vertex is polynomial-time solvable for graphs of diameter~$2$.

\keywords{near-bipartite graphs, independent feedback vertex set, diameter, computational complexity}
\end{abstract}

\section{Introduction}

A graph is {\em near-bipartite} if its vertex set can be partitioned into sets~$A$ and~$B$, where~$A$ is an independent set and~$B$ induces a forest.
The set~$A$ is said to be an {\em independent feedback vertex set} and the pair $(A,B)$ is said to be a {\em near-bipartite decomposition}.
This leads to the following two related decision problems.

\problemdef{\sc Near-Bipartiteness}{a graph~$G$.}{is~$G$ near-bipartite (that is, does~$G$ have an independent feedback vertex set)?}

\problemdef{\sc Independent Feedback Vertex Set}{a graph~$G$ and an integer $k\geq 0$.}{does~$G$ have an independent feedback vertex set of size at most~$k$?}

\noindent
Setting $k=n$ shows that the latter problem is more general than the first problem.
Thus, if {\sc Near-Bipartiteness} is \NP-complete for some graph class, then so is {\sc Independent Feedback Vertex Set}, and if {\sc Independent Feedback Vertex Set} is polynomial-time solvable for some graph class, then so is {\sc Near-Bipartiteness}.

\begin{sloppypar}
Note that every near-bipartite graph is $3$-colourable, that is, its vertices can be coloured with at most three colours such that no two adjacent vertices are coloured alike.
The problems {\sc $3$-Colouring}~\cite{Lo73} and {\sc Near-Bipartiteness}~\cite{BLS98} (and thus {\sc Independent Feedback Vertex Set}) are \NP-complete.
However, their complexities do not necessarily coincide on special graph classes.
Gr\"otschel, Lov\'asz and Schrijver~\cite{GLS84} proved that {\sc Colouring} is polynomial-time solvable for perfect graphs even if the permitted number of colours~$k$ is part of the input.
However, Brandst\"adt et~al.~\cite{BBKNP13} proved that {\sc Near-Bipartiteness} remains \NP-complete for perfect graphs.
The same authors also showed that {\sc Near-Bipartiteness} is polynomial-time solvable for $P_4$-free graphs.
\end{sloppypar}

Yang and Yuan~\cite{YY06} proved that {\sc Near-Bipartiteness} also remains \NP-complete for graphs of maximum degree~$4$.
To complement their hardness result, Yang and Yuan~\cite{YY06} showed that every connected graph of maximum degree at most~$3$ is near-bipartite except the complete graph~$K_4$ on four vertices.
This also follows from a more general result of Catlin and Lai~\cite{CL95}.
Recently we gave a linear-time algorithm for finding an independent feedback vertex set in a graph of maximum degree at most~$3$~\cite{BDFJP17}, and also proved that {\sc Near-Bipartiteness} is \NP-complete even for line graphs of maximum degree~$4$~\cite{BDFJP17b}.
It is also known that {\sc Near-Bipartiteness} is \NP-complete for planar graphs; this follows from a result of Dross, Montassier and Pinlou~\cite{DMP16}; see the arXiv version of~\cite{BDFJP17} for details.

\begin{sloppypar}
Tamura, Ito and Zhou~\cite{TIZ15} proved that {\sc Independent Feedback Vertex Set} is \NP-complete for planar bipartite graphs of maximum degree~$4$ (note that {\sc Near-Bipartiteness} is trivial for bipartite graphs).
They also proved that {\sc Independent Feedback Vertex Set} is linear-time solvable for graphs of bounded treewidth, chordal graphs and $P_4$-free graphs (the latter result generalising the result of~\cite{BBKNP13} for {\sc Near-Bipartiteness} on $P_4$-free graphs).
In~\cite{BDFJP17b} we proved that finding a minimum independent feedback vertex set is polynomial-time solvable even for $P_5$-free graphs.
We refer to~\cite{AGSS16,MPRS12} for FPT algorithms with parameter~$k$ for finding an independent feedback vertex set of size at most~$k$.
\end{sloppypar}

The {\em distance} between two vertices~$u$ and~$v$ in a graph~$G$ is the length (number of edges) of a shortest path between~$u$ and~$v$.
The {\em diameter} of a graph~$G$ is the maximum distance between any two vertices in~$G$.
In addition to their results for graphs of bounded maximum degree, Yang and Yuan~\cite{YY06} proved that {\sc Near-Bipartiteness} is polynomial-time solvable for graphs of diameter at most~$2$ and \NP-complete for graphs of diameter at most~$4$.
They asked the following question, which was also posed by Brandst\"adt et~al.~\cite{BBKNP13}:

\medskip
\noindent
{\em What is the complexity of {\sc Near-Bipartiteness} for graphs of diameter~$3$?}

\medskip
\noindent
{\bf Our Results.} We complete the complexity classifications of {\sc Near-Bipartiteness} and {\sc Independent Feedback Vertex Set} for graphs of bounded diameter.
In particular, we prove that {\sc Near-Bipartiteness} is \NP-complete for graphs of diameter~$3$, which answers the above question.
We also prove that {\sc Independent Feedback Vertex Set} is polynomial-time solvable for graphs of diameter~$2$.
This generalises the result of Yang and Yuan~\cite{YY06} for {\sc Near-Bipartiteness} restricted to graphs of diameter~$2$.

\begin{sloppypar}
\begin{theorem}\label{t-main}
Let $k\geq 0$ be an integer.
\begin{enumerate}[(i)]
\item If $k\leq 2$, then {\sc Independent Feedback Vertex Set} (and thus {\sc Near-Bipartiteness}) is polynomial-time solvable for graphs of diameter~$k$.\\[-8pt]
\item If $k\geq 3$, then {\sc Near-Bipartiteness} (and thus {\sc Independent Feedback Vertex Set}) is \NP-complete for graphs of diameter~$k$.
\end{enumerate}
\end{theorem}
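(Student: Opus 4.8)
\noindent
The statement has three genuinely different ingredients: an easy boundary case ($k\le 1$), an algorithm ($k=2$), and a hardness result ($k\ge 3$). For $k\le 1$ there is essentially nothing to do: a graph of diameter $0$ is a single vertex, and a graph of diameter $1$ is a complete graph $K_n$; since an independent set in $K_n$ has at most one vertex and deleting any single vertex of $K_n$ leaves a copy of $K_{n-1}$, which contains a triangle once $n\ge 4$, the graph $K_n$ has an independent feedback vertex set only when $n\le 3$. So for $k\le 1$ there are finitely many yes-instances and the minimum can be read off directly. All the content of the theorem is therefore in the case $k=2$ for part~(i) and the case $k=3$ for part~(ii); the remaining cases $k\ge 4$ of part~(ii) will follow from $k=3$ by padding, and it suffices throughout to argue about {\sc Independent Feedback Vertex Set} in part~(i) and about {\sc Near-Bipartiteness} in part~(ii), the other halves being immediate as noted in the introduction.

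For $k=2$ the plan is to establish structural constraints on diameter-$2$ graphs that admit a near-bipartite decomposition $(A,B)$, and to turn these into a polynomial-time search. The elementary facts I would start from are: (a) since $A$ is independent and $G$ has diameter $2$, every two vertices of $A$ have a common neighbour, which necessarily lies in $B$; and (b) if $b_1b_2b_3$ is an induced path in the forest $G[B]$, then $b_2$ is the \emph{unique} common neighbour of $b_1$ and $b_3$ inside $B$, since any other one would close a $4$-cycle within $G[B]$. Constraints of this flavour tightly control the interplay between $A$ and the forest $G[B]$. I would leverage them through a case analysis --- for instance on whether $G$ is bipartite, on the number and shape of the non-trivial tree components of $G[B]$, on whether $G$ has a (nearly) dominating vertex, and on the relative sizes of $A$ and a suitable ``core'' of $G[B]$ --- so that in each case the computation of a minimum independent feedback vertex set reduces either to a bounded enumeration, or to a $2$-satisfiability instance (obtained once enough of $(A,B)$ has been fixed that the otherwise non-local acyclicity condition on $G[B]$ becomes a conjunction of pairwise constraints), or to a weighted matching/flow problem. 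The art is to choose the case distinctions so that every branch is polynomial-time solvable, and I expect a couple of degenerate branches (e.g.\ $G$ bipartite, or $G[B]$ close to an independent set) to require separate, more ad hoc treatment.

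For $k=3$, membership in \NP\ is immediate (guess $(A,B)$ and verify), and \NP-hardness is exactly the reduction of Theorem~\ref{thm:diam3} of this paper, whose output graphs can be taken to have diameter exactly $3$. For $k\ge 4$ I would reduce from the diameter-$3$ case: given an instance $G$, which we may assume has diameter exactly $3$ (instances of smaller diameter are solved by part~(i)), choose vertices $x,y$ at distance $3$ in $G$ and attach at $x$ a new pendant path on $k-3$ vertices, obtaining $G'$. Every distance in $G'$ is at most $k$, with equality realised between $y$ and the far endpoint of the attached path, so $G'$ has diameter exactly $k$. Moreover attaching a pendant path does not change near-bipartiteness: the path contains no cycle and attaches at a single vertex, so it can be placed entirely on the forest side $B$ of any near-bipartite decomposition of $G$, keeping $G[B]$ a forest and $A$ independent; and restricting any near-bipartite decomposition of $G'$ back to $V(G)$ yields one for $G$. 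Hence {\sc Near-Bipartiteness}, and therefore {\sc Independent Feedback Vertex Set}, is \NP-complete for graphs of diameter $k$ for every $k\ge 3$.

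The single real obstacle is the $k=2$ algorithm. A diameter-$2$ graph can still contain long cycles and arbitrarily large induced forests (a large star, say), so one cannot assume that a minimum independent feedback vertex set, or its complement, is small, and the requirement that $G[B]$ be acyclic is inherently non-local; the work is to find a collection of structural dichotomies under which this requirement is rendered tractable in every case.
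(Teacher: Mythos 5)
Your treatment of $k\le 1$, of $k=3$ (by deferring to Theorem~\ref{thm:diam3}) and of $k\ge 4$ (by attaching a pendant path at a vertex realising the diameter --- a step the paper leaves implicit and which you verify correctly) is sound. The genuine gap is exactly where you locate it yourself: the $k=2$ case of part~(i) is not proved but only sketched as a case analysis that you hope reduces each branch to bounded enumeration, $2$-SAT, or matching/flow. None of these mechanisms is instantiated, and the observations (a) and (b) you start from do not by themselves tame the global acyclicity condition on $G[B]$; in particular it is unclear how acyclicity could ever become a conjunction of pairwise constraints without first pinning down almost all of the decomposition. As it stands, the core of part~(i) is a research plan rather than a proof.

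The missing idea is the characterisation of Yang and Yuan (Theorem~\ref{t-yy06}): a diameter-$2$ graph is near-bipartite if and only if either $G-u$ is bipartite for some vertex $u$, or some set $X$ with $4\le |X|\le 5$ is such that its $2$-neighbour set $A_X$ (the vertices outside $X$ with at least two neighbours in $X$) and its complement form a near-bipartite decomposition. The paper strengthens the second alternative (Lemma~\ref{l-2}): when no single vertex deletion makes $G$ bipartite, \emph{every minimum} independent feedback vertex set $A$ actually equals $A_X$ for some such $X$ --- one takes $X$ to consist of the end pairs of a longest path in a tree component of $G[V\setminus A]$ of diameter at least $3$, or of the centres and one leaf each of two star components. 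This yields an $O(n^5)$-size family of candidate witnesses and hence a brute-force polynomial-time algorithm for the minimisation problem. The remaining case, where $G-u$ is bipartite for some $u$ (Lemma~\ref{l-1}), needs a separate branching argument exploiting that diameter $2$ forces complete adjacency between the vertices non-adjacent to $u$ and the whole opposite side of the bipartition, which creates large twin-sets. Without a bounded-witness structural statement of this kind, your case analysis has no reason to terminate in polynomially many tractable branches.
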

\end{sloppypar}

\noindent
We prove Theorem~\ref{t-main}~(i) in Section~\ref{s-poly}.
Yang and Yuan~\cite{YY06} proved their result for {\sc Near-Bipartiteness} by giving a polynomial-time verifiable characterisation of the class of near-bipartite graphs of diameter~$2$.
We use their characterisation as the starting point for our algorithm for {\sc Independent Feedback Vertex Set}.
In fact our algorithm not only solves the decision problem but even finds a minimum independent feedback vertex set in a graph of diameter~$2$.

We prove Theorem~\ref{t-main}~(ii) in Section~\ref{s-diam3} by using a construction of Mertzios and Spirakis~\cite{MS16}, which they used to prove that {\sc $3$-Colouring} is \NP-complete for graphs of diameter~$3$.
The outline of their proof is straightforward: a reduction from {\sc $3$-Satisfiability} that constructs, for any instance~$\phi$, a graph~$H_\phi$ that is $3$-colourable if and only if~$\phi$ is satisfiable.
We reduce {\sc $3$-Satisfiability} to {\sc Near-Bipartiteness} for graphs of diameter~$3$ using the same construction, that is, we show that~$H_\phi$ is near-bipartite if and only if~$\phi$ is satisfiable.
As such, our result is an observation about the proof of Mertzios and Spirakis, but, owing to the intricacy of~$H_\phi$, this observation is non-trivial to verify.
In Section~\ref{s-diam3} we therefore repeat the construction and describe our reduction in detail, though we rely on~\cite{MS16} where possible in the proof.

\section{Independent Feedback Vertex Set for Diameter~$2$}\label{s-poly}

In this section we show how to compute a minimum independent feedback vertex set of a graph of diameter~$2$ in polynomial time.
As mentioned, our proof relies on a known characterisation of near-bipartite graphs of diameter~$2$~\cite{YY06}.
In order to explain this characterisation, we first need to introduce some terminology.

Let $G=(V,E)$ be a graph and let $X\subseteq V$.
Then the {\em $2$-neighbour set} of~$X$, denoted by~$A_X$, is the set that consists of all vertices in $V\setminus X$ that have at least two neighbours in~$X$.
A set $I\subseteq V$ is {\em independent} if no two vertices of~$I$ are adjacent.
For $u\in V$, we let $G-u$ denote the graph obtained from~$G$ after deleting the vertex~$u$ (and its incident edges).
A graph is {\em complete bipartite} if its vertex set can be partitioned into two independent sets~$S$ and~$T$ such that there is an edge between every vertex of~$S$ and every vertex of~$T$.
If~$S$ or~$T$ has size~$1$, the graph is also called a {\em star}.

\begin{theorem}[\cite{YY06}]\label{t-yy06}
A graph $G=(V,E)$ of diameter~$2$ is near-bipartite if and only if one of the following two conditions holds:
\begin{enumerate}[(i)]
\renewcommand{\theenumi}{(\roman{enumi})}
\renewcommand{\labelenumi}{(\roman{enumi})}
\item\label{cond:i}there exists a vertex~$u$ such that $G-u$ is bipartite; or
\item\label{cond:ii}there exists a set~$X$, $4\leq |X|\leq 5$, such that $(A_X,V\setminus A_X)$ is a near-bipartite decomposition.
\end{enumerate}
\end{theorem}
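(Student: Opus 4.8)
\noindent\emph{Proof plan.}
The ``if'' direction is immediate and does not use the diameter. If condition~(i) holds, with $G-u$ bipartite with parts~$S$ and~$T$, then~$S$ is independent in~$G$ (no edge of~$S$ is incident to~$u$) and the subgraph of~$G$ induced by $T\cup\{u\}$ has no edge inside~$T$, so it is a subgraph of the star with centre~$u$ and leaf set~$T$, hence a forest; thus $(S,T\cup\{u\})$ is a near-bipartite decomposition. If condition~(ii) holds, then $(A_X,V\setminus A_X)$ is a near-bipartite decomposition by hypothesis. So the real content is the ``only if'' direction, which the plan below addresses.

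Assume~$G$ has diameter~$2$, is near-bipartite, and fails condition~(i); I want to produce a set~$X$ with $4\le|X|\le 5$ satisfying~(ii). First I would fix a near-bipartite decomposition $(A,B)$ with $|A|$ minimum. If $|A|\le 1$, then $G-u$ --- for a vertex $u\in A$, or any vertex when $A=\emptyset$ --- is an induced subgraph of the forest induced by~$B$, hence a forest and so bipartite, contradicting that~(i) fails; so I may assume $|A|\ge 2$. I would then record two consequences: by minimality of~$|A|$, for each $a\in A$ the graph induced by $B\cup\{a\}$ contains a cycle, which must pass through~$a$ because~$B$ induces a forest, so~$a$ has two neighbours lying in a common component of that forest; and since~$A$ is independent and~$G$ has diameter~$2$, any two vertices of~$A$ have a common neighbour, which lies in~$B$.

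Next I would reduce the goal: it suffices to find~$X$ with $4\le|X|\le 5$ such that $A\subseteq A_X$ and~$A_X$ is independent, because then $V\setminus A_X\subseteq B$, so $G-A_X$ is an induced subgraph of the forest induced by~$B$ and hence a forest, and $(A_X,V\setminus A_X)$ is the required decomposition. To build such an~$X$ I would keep~$X$ inside~$B$ and, for each $a\in A$, place into~$X$ two neighbours of~$a$ lying in a common component of the forest induced by~$B$ (available by the first consequence above); this alone forces $A\subseteq A_X$. The two things that then need to be arranged are that~$A_X$ stays independent --- i.e.\ no vertex of~$B$ that ends up with two neighbours in~$X$ is adjacent to a vertex of~$A$ or to another vertex of~$A_X$ --- and that $|X|\le 5$. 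For the size bound I would argue, using diameter~$2$ and the forest structure of the graph induced by~$B$, that only boundedly many components of that forest interact ``essentially'' with the vertices having a neighbour in~$A$, and that inside each such component a constant number of well-chosen vertices (leaves, branch vertices, or endpoints of a short path) already witnesses every relevant vertex of~$A$; a short case analysis would then yield~$X$ of size at most~$5$, which I would pad with further vertices of~$B$ up to size~$4$ when needed, showing that the bad case where padding creates a new vertex of~$A_X$ can only occur when condition~(i) in fact holds.

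I expect the genuine difficulty to be precisely the balance in this last step: choosing too few vertices of~$B$ into~$X$ leaves some $a\in A$ with only one neighbour in~$X$, while choosing too many gives some $b\in B$ two neighbours in~$X$, which puts~$b$ into~$A_X$ and may break its independence. Proving that, armed only with diameter~$2$ and the minimality of~$|A|$, one can always thread between these two failures using at most five vertices is the heart of the matter, and it is this case analysis that pins down the constants~$4$ and~$5$.
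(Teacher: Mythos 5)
Your ``if'' direction is fine (and indeed the theorem's ``if'' direction is essentially definitional for condition~(ii)). For the ``only if'' direction, your framing matches the source: fix a near-bipartite decomposition $(A,B)$, observe $|A|\ge 2$ (else condition~(i) holds), and reduce to exhibiting a small $X\subseteq B$ with $A\subseteq A_X$ and $A_X$ independent. But the step you defer --- ``a short case analysis would then yield $X$ of size at most $5$'' --- is the entire content of the theorem, and the route you sketch for it (collect two neighbours of each $a\in A$ in a common component of $F=G[B]$, then prune and pad) does not obviously terminate at five vertices: $|A|$ is unbounded, and nothing in your two recorded consequences explains why a \emph{single} bounded witness set works simultaneously for every $a\in A$. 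You have correctly located the difficulty but not resolved it.

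The resolution (in Yang--Yuan, reproduced in strengthened form in the proof of Lemma~\ref{l-2} of this paper) is to choose $X$ from the \emph{extremal structure of the forest $F$} rather than from the neighbourhoods of vertices of $A$. If some component of $F$ has diameter at least $3$, take a longest path $v_1,\ldots,v_p$ in it and set $X=\{v_1,\ldots,v_p\}$ if $p\le 5$ and $X=\{v_1,v_2,v_{p-1},v_p\}$ if $p\ge 6$. Diameter~$2$ plus independence of $A$ forces each $u\in A$ to be adjacent to $v_1$ or to a $B$-neighbour of $v_1$, and maximality of $P$ forces that $B$-neighbour to be $v_2$ (anything else extends $P$ or closes a cycle in $F$); symmetrically at the other end, so $A\subseteq A_X$. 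Conversely any vertex with two neighbours in $X$ closes a cycle with $P$, hence lies in $A$; thus $A_X=A$, and independence of $A_X$ is automatic rather than something to be ``arranged''. If instead every component of $F$ has diameter at most $2$, then $F$ must contain at least two star components (with zero or one, condition~(i) holds, which you excluded), and one takes $X=\{v_1,w_1,v_2,w_2\}$ with $v_i$ a star-centre and $w_i$ a leaf of two such components; the same two-sided argument gives $A_X=A$. This dichotomy on $F$ is where the constants $4$ and $5$ come from, and it is the missing idea in your proposal; the padding manoeuvre you describe is not needed once $X$ is chosen this way.
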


As noted in~\cite{YY06}, Theorem~\ref{t-yy06} can be used to solve {\sc Near-Bipartiteness} in polynomial time for graphs of diameter~$2$, as conditions~\ref{cond:i} and~\ref{cond:ii} can be checked in polynomial time.
However, Theorem~\ref{t-yy06} does not tell us how to determine the size of a minimum independent feedback vertex set.

In order to find a minimum independent feedback vertex set, we will distinguish between the two cases of Theorem~\ref{t-yy06}.
This leads to two corresponding lemmas.

\begin{lemma}\label{l-1}
Let $G=(V,E)$ be a near-bipartite graph of diameter~$2$ that contains a vertex~$u$ such that $G-u$ is bipartite.
Then it is possible to find a minimum independent feedback vertex set of~$G$ in polynomial time.
\end{lemma}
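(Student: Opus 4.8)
The plan is to exploit the rigidity of bipartite graphs of diameter~$2$: if $G-u$ is bipartite with bipartition $(L,R)$, then since $G$ has diameter~$2$, almost every vertex must be adjacent to $u$, and the structure of $G-u$ is extremely limited. First I would observe that $G-u$ itself has diameter at most~$2$ as a graph in its own right only after possibly adding edges through $u$; more precisely, any two vertices of $G-u$ are at distance at most~$2$ in $G$, so any two vertices in the same side of the bipartition $(L,R)$ must have a common neighbour, and that common neighbour is either $u$ or a vertex on the opposite side. I would then show that $L$ and $R$ can each be partitioned into the set of vertices adjacent to $u$ and the (small or structured) set of vertices not adjacent to $u$, and that the non-neighbours of $u$ force $G-u$ to be (close to) a complete bipartite graph or a disjoint union of a complete bipartite graph with isolated-ish pieces. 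This is the step I expect to be the main obstacle: nailing down exactly which bipartite graphs of diameter~$2$ can arise, because the bookkeeping of common neighbours has several cases depending on whether $u$ is adjacent to all, some, or none of each side.

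Once the structure of $G-u$ is pinned down, the plan is to reduce the search for a minimum independent feedback vertex set $A$ to a small number of guesses. The key point is that in any near-bipartite decomposition $(A,B)$, the set $B$ induces a forest, so $B$ contains at most one vertex from any "large" complete bipartite piece beyond a spanning forest's worth of edges — concretely, a complete bipartite graph $K_{s,t}$ with $s,t\geq 2$ contains a cycle, so $A$ must hit every such $K_{2,2}$. I would argue that, up to the vertex $u$ and the bounded set of non-neighbours of $u$, the graph $G$ looks like a complete bipartite graph (or a bounded union of such), and for complete bipartite graphs the minimum independent feedback vertex set is easy to compute directly: one side minus at most one vertex, or a small explicit set. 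I would therefore enumerate: (a) whether $u\in A$ or $u\in B$; (b) for the $O(1)$ vertices that are non-neighbours of $u$, which side of $A$/$B$ they land on; and (c) within the complete-bipartite bulk, which single vertex of each side (if any) is kept in $B$. Each choice fixes the "hard" part, and the remaining vertices can be greedily or optimally assigned because $B$ restricted to the bulk is then a forest automatically.

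Finally, I would verify that for each of the polynomially (indeed constantly) many guesses, checking whether the induced assignment extends to a valid near-bipartite decomposition, and computing the size of the resulting $A$, takes polynomial time: testing that $A$ is independent is trivial, and testing that $G[B]$ is a forest is a linear-time acyclicity check. Taking the minimum over all guesses that yield a valid decomposition gives a minimum independent feedback vertex set. The correctness argument is that any optimal $(A,B)$ must agree with one of the guesses on $u$ and on the non-neighbours of $u$, and then $A$ restricted to the bulk must be a near-minimum hitting set for the cycles of the complete bipartite part, which our explicit formula captures exactly. The only real work is the structural classification in the first paragraph; after that the algorithm and its analysis are routine.
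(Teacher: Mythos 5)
Your structural classification in the first paragraph does not hold, and the algorithm built on it fails. Write $S_1\cup T_1$ for the neighbours of $u$ and $S_2\cup T_2$ for the non-neighbours (split by the bipartition of $G-u$). Two of your claims are false. First, the non-neighbours of $u$ need not be $O(1)$ in number: take $K_{n,n}$ plus a vertex $u$ adjacent to one vertex on each side; this has diameter~$2$ and $u$ has $2n-2$ non-neighbours. What \emph{is} true (and is what the paper exploits) is that diameter~$2$ forces every vertex of $S_2$ to be adjacent to \emph{all} of $T_1\cup T_2$ and vice versa, so $S_2$ and $T_2$ are twin-sets; one then needs a separate argument (the paper's Claim~1) that an optimal solution is nearly monochromatic on a twin-set, so that branching on one representative per twin-set suffices. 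Your enumeration over "the $O(1)$ non-neighbours" is not a well-defined step without this. Second, and more seriously, the graph induced on the \emph{neighbours} of $u$ is not "complete-bipartite bulk": it can be an arbitrary bipartite graph. For instance, let $G-u$ be a long path $P$ and let $u$ be adjacent to every vertex of $P$; then $G$ has diameter~$2$, $S_2\cup T_2=\emptyset$, and the minimum independent feedback vertex set has linear size (if $u$ gets colour~$2$, every edge of $P$ closes a triangle through $u$, so each component of $P$ must be properly split between $A$ and $B$). Your step (c), "which single vertex of each side is kept in $B$," cannot produce such a solution, so your algorithm returns a wrong answer on this instance.

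The missing idea is precisely how to handle the arbitrary bipartite graph on the neighbours of $u$. The paper does this by observing that once $u$ is assigned colour~$2$, each connected component $D_i$ of the non-isolated part of $G[S_1\cup T_1]$ must have one bipartition class entirely in $A$ and the other entirely in $B$ (any violation creates an edge inside $A$ or a triangle through $u$); one then takes the smaller class of each $D_i$ when $S_2\cup T_2=\emptyset$, and when $S_2\cup T_2\neq\emptyset$ one uses the fact that a colour-$2$ vertex of $S_2$ is adjacent to all of $T_1$, so at most one vertex of $T_1$ outside $Z$ can get colour~$2$, giving $O(n)$ branches. None of this is visible from a "complete bipartite plus a bounded exceptional set" picture, so the gap is not one of missing detail but of a wrong structural premise.
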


\begin{proof}
We can partition $V\setminus\{u\}$ into four independent sets $S_1$, $S_2$, $T_1$, $T_2$ (some of which might be empty) such that
\begin{enumerate}[(i)]
\renewcommand{\theenumi}{(\roman{enumi})}
\renewcommand{\labelenumi}{(\roman{enumi})}
\item\label{prop:i}$S_1\cup S_2$ and $T_1\cup T_2$ form bipartition classes of $G-u$;
\item\label{prop:ii}$u$ is adjacent to every vertex of $S_1\cup T_1$; and
\item\label{prop:iii}$u$ is non-adjacent to every vertex of $S_2\cup T_2$.
\end{enumerate}
Moreover, as~$G$ has diameter~$2$, it follows that given a vertex of~$S_2$ (respectively,~$T_2$) and a vertex of $T_1\cup T_2$ (respectively, $S_1\cup S_2$), these two vertices must either be adjacent or have a common neighbour.
As the latter is not possible, we deduce that
\begin{enumerate}[(i)]
\renewcommand{\theenumi}{(\roman{enumi})}
\renewcommand{\labelenumi}{(\roman{enumi})}
\setcounter{enumi}{3}
\item\label{prop:iv}every vertex of~$S_2$ is adjacent to every vertex of $T_1\cup T_2$, and every vertex of~$T_2$ is adjacent to every vertex of $S_1\cup S_2$ (see also \figurename~\ref{fig:STu}).
\end{enumerate}
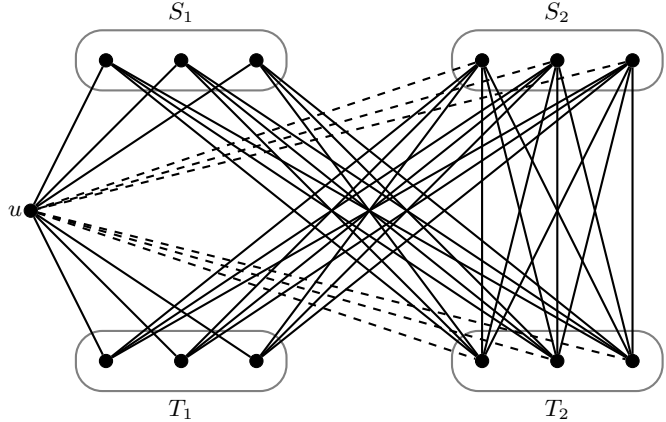
\begin{figure}[h]
\begin{center}
\tikzstyle{vertex}=[circle,draw=black, fill=black, minimum size=5pt, inner sep=1pt]
\tikzstyle{edge} =[draw,black]
\begin{tikzpicture}

\foreach \pos/\name/\number/\posl in {
(0,-2)/T/1/below,
(0,2)/S/1/above,
(5,-2)/T/2/below,
(5,2)/S/2/above}
{
\begin{scope}[shift={\pos}]
\draw[rounded corners=10pt, thick, black!50!white] (0.6,-0.4) rectangle (3.4,0.4);
\node[vertex] (\name\number1) at (1,0) {};
\node[vertex] (\name\number2) at (2,0) {};
\node[vertex] (\name\number3) at (3,0) {};
\node [\posl=0.4, align=center] at (2,0) {$\name_\number$};
\end{scope}
}

\node[vertex] (u) at (0,0) {};

\foreach \numa/\numb in {1/1,1/2,2/1,2/2,1/3,2/3,3/3,3/2,3/1} {
\path[edge, black, thick] (S1\numa) -- (T2\numb) -- (S2\numa) -- (T1\numb);
}

\foreach \num in {1,2,3} {
\path[edge, black, thick] (S1\num) -- (u) -- (T1\num);
\path[edge, black, dashed, thick] (S2\num) -- (u) -- (T2\num);
}

\node [left, align=center] at (u) {$u$};
\end{tikzpicture}
\end{center}
\caption{The graph~$G$, which consists of the vertex~$u$ and the independent sets $S_1\cup S_2$ and $T_1\cup T_2$.
Dashed lines indicate edges that are not present.
Edges between vertices of~$S_1$ and vertices of~$T_1$ are not drawn, as such edges may or may not exist.}\label{fig:STu}
\end{figure}

A (not necessarily proper) $2$-colouring of the vertices of a graph is \emph{good} if the vertices coloured~$1$ form an independent set and the vertices coloured~$2$ induce a forest.
The set of vertices coloured~$1$ in a good $2$-colouring is said to be a \emph{$1$-set} and is, by definition, an independent feedback vertex of~$G$.
A good $2$-colouring of~$G$ is {\em optimal} if its $1$-set is of minimum possible size among all good $2$-colourings.
Our algorithm colours vertices one by one with colour~$1$ or~$2$ to obtain a number of good $2$-colourings.
We will establish that our approach ensures that at least one of our good $2$-colourings is optimal.
Therefore, as our algorithm finds different good $2$-colourings, it only needs to remember the smallest $1$-set seen so far.
We note that~$G$ certainly has good $2$-colourings as, for example, we can let either $S_1\cup S_2$ or $T_1\cup T_2$ be the set of vertices coloured~$1$.

We say that an edge is a {\em $1$-edge} if both its end-points have colour~$1$ and say that a cycle of~$G$ is a {\em $2$-cycle} if all its vertices have colour~$2$.
Our algorithm will consist of a number of branches depending on the way we will colour the vertices of~$G$.
Whenever we detect a $1$-edge or a $2$-cycle in a branch, we can discard the branch as we know that we are not going to generate a good $2$-colouring.
Before we describe our algorithm, we first prove the following claim.
Here, we say that an independent set~$I$ is a {\em twin-set} if every vertex of~$I$ has the same neighbourhood.

\clm{\label{clm:1}Let~$I$ be a twin-set.
In every optimal $2$-colouring, at least $|I|-1$ vertices of~$I$ obtain the same colour.}
We prove Claim~\ref{clm:1} as follows.
If $|I|=1$, the claim is trivial.
Suppose $|I|\geq 2$ and let~$J$ be the neighbourhood of the vertices of~$I$.
Note that~$J$ is non-empty since~$|I|\geq 2$ and~$G$ is connected.
Let~$c$ be an optimal $2$-colouring of~$G$.
If~$c$ gives colour~$1$ to a vertex of~$J$, then every vertex of~$I$ must receive colour~$2$.
Now suppose that~$c$ gives colour~$2$ to every vertex of~$J$.
If $|J|=1$, then~$c$ colours every vertex of~$I$ with colour~$2$, as doing this will not create a $2$-cycle.
If $|J|\geq 2$ then, in order to avoid a $2$-cycle, at least $|I|-1$ vertices of~$I$ must be coloured~$1$.
This proves Claim~\ref{clm:1}.

\medskip
\noindent
By \ref{prop:i}, \ref{prop:iii}, \ref{prop:iv}, we find that~$S_2$ and~$T_2$ are twin-sets.
Let~$Z$ be the set of isolated vertices in the subgraph of~$G$ induced by $S_1\cup T_1$.
Then by \ref{prop:i}, \ref{prop:ii}, \ref{prop:iv}, the neighbourhood of every vertex in $Z\cap S_1$ (respectively, $Z\cap T_1$) is $T_2\cup \{u\}$ (respectively, $S_2\cup \{u\}$).
So $Z\cap S_1$ and $Z\cap T_1$ are twin-sets.

We choose one vertex from each non-empty set in $\{S_2,T_2,Z\cap S_1,Z\cap T_1\}$ and let~$W$ be the set of chosen vertices.
Note that the choice of the vertices in~$W$ can be done arbitrarily, since all four of these sets are twin-sets.
We now branch by giving all vertices in $S_2 \setminus W$ the same colour, all vertices in $T_2 \setminus W$ the same colour, all vertices in $(Z\cap S_1)\setminus W$ the same colour and all vertices in $(Z\cap T_1)\setminus W$ the same colour.
We then branch by colouring the at most four vertices of~$W$ with every possible combination of colours.
Hence the total number of branches is at most~$2^8$.
We discard any branch that yields a $1$-edge or $2$-cycle.
Let $S_1' = S_1 \setminus Z$ and $T_1'=T_1 \setminus Z$.
For each remaining branch we try to colour the remaining vertices of~$G$, which are all in $S_1'\cup T_1'\cup \{u\}$, and keep track of any minimum $1$-set found.
In the end we return a $1$-set of minimum size (recall that~$G$ has at least two $1$-sets).

For any remaining branch we do as follows.
We first give colour~$1$ to~$u$.
Then every vertex of $S_1'\cup T_1'$ must get colour~$2$.
If this does not yield a $1$-edge or $2$-cycle, we obtain a $1$-set, which we remember
if it is the smallest one found so far.

We now give colour~$2$ to~$u$.
If~$u$ was the only remaining vertex, we check for the presence of a $1$-edge or a $2$-cycle, and if none is present, we remember the $1$-set found
if it is the smallest one found so far.
Otherwise, we let $D_1,\ldots,D_r$ for some integer $r\geq 1$ be the connected components of the (bipartite) graph induced by $S_1'\cup T_1'$.
As these vertices do not belong to~$Z$, each~$D_i$ contains at least one edge.
Moreover, each~$D_i$ is bipartite.
For $i \in \{1,\ldots,r\}$, we denote the two non-empty bipartition classes of~$D_i$ by~$D_i^1$ and~$D_i^2$ such that $|V(D_i^1)|\leq |V(D_i^2)|$.
The following claim is crucial.

\clm{\label{clm:2}For $i\in\{1,\ldots,r\}$, we must either colour all vertices of~$D_i^1$ with colour~$1$ and all vertices of~$D_i^2$ with colour~$2$, or vice versa.}
We prove Claim~\ref{clm:2} as follows.
Suppose that~$D_i^1$ contains a vertex with the same colour as a vertex of~$D_i^2$.
As~$D_i$ is connected and bipartite, this means that~$D_i$ contains an edge~$vw$ whose end-vertices are either both coloured~$1$ or coloured~$2$.
In the first case, we obtain a $1$-edge.
In the second case the vertices~$u$, $v$ and~$w$ form a $2$-cycle in~$G$.
Hence we must use colours~$1$ and~$2$ for different partition classes of~$D_i$.
This proves Claim~\ref{clm:2}.

\medskip
\noindent
We now proceed as follows.
First suppose that $S_2 \cup T_2$ is non-empty.
If we coloured a vertex in~$S_2$ (respectively~$T_2$) with colour~$1$, then every vertex in~$T_1'$ (respectively~$S_1'$) must be coloured~$2$ and therefore every vertex in~$S_1'$ (respectively~$T_1'$) must be coloured~$1$ by Claim~\ref{clm:2}.
Again, in this case we discard the branch if a $1$-edge or $2$-cycle is found; otherwise we remember the corresponding $1$-set if it is the best set found so far.
In every other case, we must have coloured every vertex of non-empty set $S_2\cup T_2$ with colour~$2$.
Without loss of generality, assume that there is a vertex $s \in S_2$ that is coloured~$2$.
Then at most one vertex of~$T_1'$ may have colour~$2$, as otherwise we obtain a $2$-cycle by involving the vertices~$s$ and~$u$.
We branch by guessing this vertex and then colouring it either~$1$ or~$2$, while assigning colour~$1$ to all other vertices of~$T_1'$.
Then the only vertices with no colour yet are in~$S_1'$, but their colour is determined by the colours of the vertices in~$T_1'$ due to Claim~\ref{clm:2}.

We are left to deal with the case where $S_2\cup T_2=\emptyset$.
Claim~\ref{clm:2} tells us that we must either give every vertex of~$D_i^1$ colour~$1$ and every vertex of~$D_i^2$ colour~$2$, or vice versa.
For $i \in \{1,\ldots,r\}$ we give colour~$1$ to every vertex of every~$D_i^1$; as $|V(D_i^1)|\leq |V(D_i^2)|$, this is the best possible good $2$-colouring for this branch.

\medskip
\noindent
The correctness of our algorithm follows from the fact that we distinguish all possible cases and 
find a best possible good $2$-colouring (if one exists) in each case.
Note that it takes polynomial time to find the sets~$S_1$, $S_2$, $T_1$ and~$T_2$.
Moreover, the number of branches is~$O(n)$ and each branch can be processed in polynomial time, as we only need to search for a $1$-edge or $2$-cycle.
Hence our algorithm runs in polynomial time.\qed
\end{proof}

\begin{lemma}\label{l-2}
Let $G=(V,E)$ be a near-bipartite graph of diameter~$2$ that contains no vertex~$u$ such that $G-u$ is bipartite.
Then it is possible to find a minimum independent feedback vertex of~$G$ in polynomial time.
\end{lemma}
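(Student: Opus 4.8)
The plan is to exploit Theorem~\ref{t-yy06}: since $G$ is near-bipartite of diameter~$2$ but condition~\ref{cond:i} fails by hypothesis, condition~\ref{cond:ii} must hold, so there is a set $X$ with $4\leq |X|\leq 5$ such that $(A_X,V\setminus A_X)$ is a near-bipartite decomposition, where $A_X$ is the $2$-neighbour set of $X$. The crucial structural point is that the choice of $X$ is essentially constrained: there are only $O(n^5)$ candidate sets $X$ of size $4$ or $5$, and for each one the set $A_X$ is \emph{determined} by $X$. So the algorithm will iterate over all such $X$, compute $A_X$, and for each one try to find a \emph{minimum} independent feedback vertex set subject to making use of this particular near-bipartite structure.

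First I would enumerate all sets $X$ with $4\leq|X|\leq 5$; this is polynomial. For each $X$, compute $A_X$ and check in polynomial time whether $(A_X, V\setminus A_X)$ is a near-bipartite decomposition (i.e.\ $A_X$ independent and $V\setminus A_X$ inducing a forest); discard $X$ if not. For each surviving $X$ we know $A_X$ is \emph{an} independent feedback vertex set, but not necessarily a minimum one. The key observation — analogous to the role of $u$ in Lemma~\ref{l-1} — is that once we commit to the ``shape'' coming from $X$, the forest $F=V\setminus A_X$ together with diameter-$2$ constraints forces most of the structure: vertices of $A_X$ have $\geq 2$ neighbours in $X\subseteq F$, and diameter~$2$ severely limits how the remaining vertices interact. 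I would then set up a good $2$-colouring search (reusing the ``$1$-edge'', ``$2$-cycle'', ``good'', ``optimal'' terminology from the proof of Lemma~\ref{l-1}), where colour~$1$ marks the independent feedback vertex set. The branching is guided by: which vertices of $X$ (and of $F$ near $X$) get which colour, and then propagating forced colours along components of $F$ exactly as in Claim~\ref{clm:2}. Because $|X|$ is bounded by $5$, the initial branching on $X$ is constant-size, and the propagation through the forest components is deterministic up to a bounded number of ``guess one exceptional vertex'' steps, again mirroring Lemma~\ref{l-1}.

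More concretely, I expect the argument to proceed by: (a) fixing $X$ and the near-bipartite decomposition $(A_X,F)$ it yields; (b) observing that any good $2$-colouring we seek can be assumed to give colour~$2$ to enough of $F$ that the remaining $1$-vertices form an independent set — and bounding, using diameter~$2$, how a $1$-coloured vertex outside $A_X$ can coexist with the rest; (c) identifying twin-sets (vertices with identical neighbourhoods, forced by diameter~$2$ together with the bipartite-like adjacency pattern around $X$) and invoking a Claim~\ref{clm:1}-style argument so that within each twin-set all but at most one vertex share a colour; (d) branching over the $O(1)$ choices on $X$ and the representatives of the $O(1)$ relevant twin-sets, discarding branches with a $1$-edge or $2$-cycle; and (e) in each surviving branch, colouring the forest components greedily (smaller bipartition class gets colour~$1$) as justified by a Claim~\ref{clm:2}-style forcing argument, then recording the minimum $1$-set. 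Taking the minimum over all $X$ and all branches gives a minimum independent feedback vertex set; correctness follows since an optimum near-bipartite decomposition of $G$ that avoids condition~\ref{cond:i} must, by Theorem~\ref{t-yy06}, realise condition~\ref{cond:ii} for \emph{some} $X$, and our search over that $X$ finds something at least as small.

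The main obstacle I anticipate is step~(b)/(c): controlling the vertices \emph{outside} $A_X$ that one might still wish to colour~$1$ in an optimum solution — in Lemma~\ref{l-1} the single vertex $u$ made this clean, but here $A_X$ can be large and its complement $F$ is only a forest, so I will need the diameter-$2$ hypothesis to show that the relevant ``extra'' vertices all fall into a bounded collection of twin-sets (so that Claim~\ref{clm:1} applies) rather than forming an unbounded family whose colours must be chosen independently. Verifying that the adjacency pattern forced by $X$ plus diameter~$2$ really does collapse into finitely many twin-sets (as happened with $S_2,T_2,Z\cap S_1,Z\cap T_1$ in Lemma~\ref{l-1}) is the delicate part; once that is established, the branching and forest-colouring bookkeeping is routine and polynomial-time.
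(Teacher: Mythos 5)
There is a genuine gap, and it sits exactly where you locate the ``main obstacle.'' Your correctness argument at the end asserts that ``an optimum near-bipartite decomposition of~$G$ \ldots must, by Theorem~\ref{t-yy06}, realise condition~\ref{cond:ii} for some~$X$.'' Theorem~\ref{t-yy06} does not say this: it only guarantees that \emph{some} near-bipartite decomposition of~$G$ has the form $(A_X,V\setminus A_X)$, not that a \emph{minimum} independent feedback vertex set arises this way. The statement you are implicitly relying on --- that a minimum independent feedback vertex set~$A$ satisfies $A=A_X$ for some~$X$ with $4\leq |X|\leq 5$ --- is precisely the stronger structural claim that has to be proved, and your proposal neither proves it nor supplies a working substitute: the branching machinery in steps (b)--(e) is meant to compensate for $A_X$ possibly being non-minimum, but you yourself flag that the reduction of the ``extra'' vertices outside~$A_X$ to finitely many twin-sets is unverified, and there is no reason offered why it should hold (unlike in Lemma~\ref{l-1}, where the single vertex~$u$ and the bipartition of $G-u$ force the twin-set structure, here $V\setminus A_X$ is an arbitrary forest).

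The paper's proof closes exactly this gap, and once it is closed no branching is needed at all. One fixes a minimum independent feedback vertex set~$A$ and lets $F$ be the forest induced by $B=V\setminus A$, then argues by cases on~$F$: if some tree component has diameter at least~$3$, take~$X$ to be the two endpoints of a longest path together with their neighbours on the path (all of the path if it has at most~$5$ vertices); otherwise every component is an isolated vertex or a star, the hypothesis that no $G-u$ is bipartite forces at least two star components, and~$X$ consists of the two star centres plus one leaf from each. In both cases a short diameter-$2$/longest-path argument shows $A=A_X$ exactly. The algorithm is then just your first paragraph: enumerate all~$X$ of size~$4$ or~$5$, keep those for which $(A_X,V\setminus A_X)$ is a near-bipartite decomposition, and return a smallest such~$A_X$. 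To repair your proposal you would need to either prove this structural claim (in which case steps (b)--(e) become superfluous) or carry out the twin-set analysis you defer, which in general does not go through for an arbitrary forest complement.
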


\begin{proof}
As~$G$ is near-bipartite, it has an independent feedback vertex set.
Let~$A$ be a minimum independent feedback vertex set.
We claim that~$G$ contains a set of vertices~$X$ of size $4\leq |X|\leq 5$ such that $A_X=A$.
This would immediately give us a polynomial-time algorithm.
Indeed, it would suffice to check, for every set~$X$ of size $4\leq |X|\leq 5$, whether $(A_X,V\setminus A_X)$ is a near-bipartite decomposition and to return a set~$A_X$ of minimum size that satisfies this condition.
This takes polynomial time.

To prove the above claim we will follow the same line of reasoning as in the proof of Theorem~\ref{t-yy06}
However, our arguments are slightly different, as we need to prove a stronger statement.

Let $B=V\setminus A$ and let~$F$ be the subgraph of~$G$ induced by~$B$.
By definition, $F$ is a forest, so all of its connected components are trees.

We will first consider the case where~$F$ has a connected component~$T$ of diameter at least~$3$.
Let~$P$ be a longest path in the tree~$T$ on vertices $v_1,\ldots,v_p$ in that order.
As~$T$ has diameter~$3$, we find that $p\geq 4$.
If $p\leq 5$, then we let $X=\{v_1,\ldots,v_p\}$.
If $p\geq 6$, then we let $X=\{v_1,v_2,v_{p-1},v_p\}$.
We will show that $A=A_X$.
Let $u\in A$.
As~$G$ has diameter~$2$ and~$A$ is an independent set, $u$ is adjacent to~$v_1$ or to a neighbour~$v^*$ of~$v_1$ in~$B$.
In the latter case, if $v^*\neq v_2$ then~$v^*$ must have a neighbour in $\{v_2,\ldots,v_p\}$, otherwise we have found a path that is longer than~$P$, but in this case~$B$ contains a cycle, a contradiction.
Hence, $u$ has at least one neighbour in $\{v_1,v_2\}$, and similarly, $u$ has at least one neighbour in $\{v_{p-1},v_p\}$.
So $A\subseteq A_X$.
Now suppose $u\in A_X$.
Note that $u\neq v_3$ due to our choice of~$X$.
Then the subgraph of~$G$ induced by $V(P)\cup \{u\}$ contains a cycle.
Hence~$u$ must belong to~$A$.
So $A_X\subseteq A$.
We conclude that $A=A_X$.

We now consider the case where every connected component of~$F$ has diameter at most~$2$.
Such components are either isolated vertices or stars (we say that the latter components are {\em star-components} and that their non-leaf vertex is the {\em star-centre};
if such a component consists of a single edge, we arbitrarily choose one of them to be the star-centre).
If~$F$ contains no star-components, then~$G$ is bipartite and therefore $G-u$ is bipartite for every vertex~$u$, a contradiction.
If~$F$ contains exactly one star-component, then by choosing~$u$ to be the star-centre we again find that $G-u$ is bipartite.
Hence~$F$ contains at least two star-components~$D_1$ and~$D_2$.
For $i=1,2$, let~$v_i$ be the star-centre and let~$w_i$ be a leaf in~$D_i$.

We choose $X=\{v_1,v_2,w_1,w_2\}$ and show that $A=A_X$.
Let $u\in A$.
As~$G$ has diameter~$2$ and~$A$ is an independent set, $u$ is either adjacent to~$w_1$ or to a neighbour of~$w_1$ in~$B$.
If this neighbour is not~$v_1$, then~$D_1$ is not a star-component, a contradiction.
Hence, $u$ has at least one neighbour in $\{v_1,w_1\}$, and similarly, $u$ has at least one neighbour in $\{v_2,w_2\}$.
So $A\subseteq A_X$.
Now suppose $u\in A_X$.
Then $X\cup \{u\}$ induces either a connected subgraph of~$G$ that contains both~$D_1$ and~$D_2$ (and is therefore not a star-component) or a subgraph with a cycle.
Hence~$u$ must belong to~$A$.
So $A_X\subseteq A$.
We conclude that $A=A_X$.
This completes the proof of our claim and thus the proof of the lemma.\qed
\end{proof}

We are now ready to prove the main result of this section.

\begin{theorem}\label{t-p}
The problem of finding a minimum independent feedback vertex set of a graph of diameter~$2$ can be solved in polynomial time.
\end{theorem}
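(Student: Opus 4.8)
The plan is to assemble Theorem~\ref{t-p} from the characterisation of Yang and Yuan (Theorem~\ref{t-yy06}) together with Lemmas~\ref{l-1} and~\ref{l-2}, by a short case analysis. Given a graph $G$ of diameter~$2$, I would first decide whether $G$ is near-bipartite at all. By Theorem~\ref{t-yy06} this reduces to two polynomial-time tests: for condition~\ref{cond:i}, loop over all vertices $u$ and check whether $G-u$ is bipartite; for condition~\ref{cond:ii}, loop over all sets $X$ with $4 \leq |X| \leq 5$ (only $O(n^5)$ of them), compute $A_X$, and check whether $A_X$ is independent and $V \setminus A_X$ induces a forest. If neither condition holds, then $G$ is not near-bipartite, it has no independent feedback vertex set, and the algorithm reports this.

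If $G$ is near-bipartite, I would split into the two cases of the characterisation. If some vertex $u$ has $G-u$ bipartite --- a fact already discovered while testing condition~\ref{cond:i} --- then the hypothesis of Lemma~\ref{l-1} holds and I would call the algorithm of Lemma~\ref{l-1} to return a minimum independent feedback vertex set in polynomial time. Otherwise condition~\ref{cond:ii} must hold, and in particular there is no vertex $u$ with $G-u$ bipartite, so the hypothesis of Lemma~\ref{l-2} holds and I would call the algorithm of Lemma~\ref{l-2} instead. In either case the output is a minimum-size independent feedback vertex set, and since the near-bipartiteness test, the case distinction, and each of the two lemmas all run in polynomial time, so does the whole procedure.

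I do not expect a genuine obstacle here: all the difficulty has already been absorbed into Lemmas~\ref{l-1} and~\ref{l-2}. The only points deserving an explicit remark are that conditions~\ref{cond:i} and~\ref{cond:ii} are jointly exhaustive for near-bipartite graphs of diameter~$2$ and both are polynomial-time verifiable (so non-near-bipartite inputs are correctly rejected); that the hypotheses of the two lemmas are exactly complementary, so precisely one of them applies once $G$ is known to be near-bipartite; and that a graph of diameter~$2$ is connected, a fact used implicitly inside the lemmas. Putting these observations together yields the theorem.
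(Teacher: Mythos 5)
Your proposal is correct and follows essentially the same route as the paper: both use Theorem~\ref{t-yy06} to detect non-near-bipartite inputs and then dispatch to Lemma~\ref{l-1} or Lemma~\ref{l-2} according to whether some vertex $u$ makes $G-u$ bipartite. The only cosmetic difference is the order of the tests (the paper checks condition~\ref{cond:i} first and only tests condition~\ref{cond:ii} if it fails), which does not affect correctness.
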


\begin{proof}
Let~$G$ be an $n$-vertex graph of diameter~$2$.
We first check in polynomial time whether~$G$ contains a vertex~$u$ such that $G-u$ is bipartite.
If so, then we apply Lemma~\ref{l-1}.
If not, then we check in polynomial time whether~$G$ contains a set~$X$ of size $4\leq |X|\leq 5$ such that $(A_X,V\setminus A_X)$ is a near-bipartite decomposition.
If so, then~$G$ is near-bipartite and we apply Lemma~\ref{l-2}.
If not, then~$G$ is not near-bipartite due to Theorem~\ref{t-yy06}.\qed
\end{proof}

We note that the running time of the algorithm in Theorem~\ref{t-p} is determined by the time it takes to find and process each set~$X$ of size $4\leq |X|\leq 5$.
This takes~$O(n^7)$ time, as checking the existence of a set~$X$ takes~$O(n^5)$ time using brute force, determining the $2$-neighbour set~$A_X$ takes~$O(n)$ time and checking if $(A_X,V\setminus A_X)$ is a near-bipartite decomposition takes~$O(n^2)$ time.

\section{Near-Bipartiteness for Diameter~$3$}\label{s-diam3}

In this section we prove that {\sc Near-Bipartiteness} is \NP-complete for graphs of diameter~$3$.
In order to prove this, we use a construction of Mertzios and Spirakis~\cite{MS16}.
To introduce this construction, we first consider the constraint graph~$J$ defined in Figure~\ref{fig:constraint-graph}.

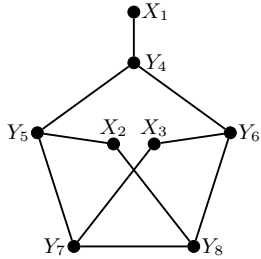
\begin{figure}[h]
\begin{center}
\scalebox{0.90}{
\tikzstyle{vertex}=[circle,draw=black, fill=black, minimum size=5pt, inner sep=1pt]
\tikzstyle{edge} =[draw,black]
\begin{tikzpicture}[scale=0.5]

   \foreach \pos/\name / \label / \posn / \dist in {
(0,4.5)/x1/$X_1$/{right}/0,
(0,3)/y1/$Y_4$/{right}/0.05,
(-0.95*3,0.31*3)/y2/$Y_5$/{left}/0,
(-0.59*3,-0.81*3)/y4/$Y_7$/{left}/0,
(0.95*3,0.31*3)/y3/$Y_6$/{right}/0,
(0.59*3,-0.81*3)/y5/$Y_8$/{right}/0,
(-0.2*3,0.2*3)/x2/$X_2$/{above}/0,
(0.2*3,0.2*3)/x3/$X_3$/{above}/0
}
       { \node[vertex] (\name) at \pos {};
       \node [\posn=\dist, align=center] at (\name) {\label};
       }
       
\foreach \source/ \dest  in {
x1/y1,y1/y2,y2/y4,y4/y5,y3/y5,y3/y1,x2/y2,x2/y5,x3/y3,x3/y4}
       \path[edge, black,  thick] (\source) --  (\dest);

\end{tikzpicture}
}
\end{center}
\caption{The constraint graph~$J$.}\label{fig:constraint-graph}
\end{figure}

\begin{lemma} \label{lem:constraint-graph}
Let~$X$ be a subset of $\{X_1, X_2, X_3\} \subset V(J)$ containing at most two vertices.
Then there exists a near-bipartite decomposition~$(A,B)$ of~$J$ such that, for $1 \leq p \leq 3$, $X_p \in A$ if and only if $X_p \in X$.
\end{lemma}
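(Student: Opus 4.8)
The plan is to analyze the structure of the constraint graph $J$ directly and, for each of the seven possible sets $X \subseteq \{X_1,X_2,X_3\}$ with $|X|\le 2$, exhibit an explicit near-bipartite decomposition $(A,B)$ with $A\cap\{X_1,X_2,X_3\}=X$. Since $J$ has only eight vertices, this is a finite case check; the goal is to organize it so the verification is short. First I would record the adjacency structure: $J$ has a "hub" structure where $X_1,X_2,X_3$ each attach to the six-vertex part $\{Y_4,\dots,Y_8\}$, and within $\{Y_4,\dots,Y_8\}$ together with $X_1$ there is a cycle (reading off the edges $X_1Y_4$, $Y_4Y_5$, $Y_5Y_7$, $Y_7Y_8$, $Y_8Y_6$, $Y_6Y_4$, and also $Y_6X_1$, plus the attachments $X_2Y_5$, $X_2Y_8$, $X_3Y_6$, $X_3Y_7$). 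I would first identify which cycles of $J$ need to be hit: a set $B$ induces a forest iff $A$ meets every cycle, and since $A$ must be independent, the constraint is that $A$ is an independent transversal of the cycle space.

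Next, for the construction to be usable, the key point is that $J$ should \emph{not} admit $X_1,X_2,X_3$ all being placed in $A$ simultaneously while the rest of $J$ stays a forest — otherwise the gadget would not encode a genuine constraint — but every proper subset should be realizable. So the substantive content is: (a) for each of the three singletons $\{X_p\}$ and the empty set, and (b) for each of the three pairs $\{X_p,X_q\}$, give $(A,B)$. I would handle the "small $A$" cases first: for $X=\emptyset$ it suffices to show $J$ itself minus nothing, i.e.\ that $J$ has a near-bipartite decomposition with all of $X_1,X_2,X_3$ in $B$; concretely pick $A$ to be one or two well-chosen $Y$-vertices that break all cycles and are mutually non-adjacent (e.g.\ something like $A=\{Y_4,Y_7\}$ or a single cut vertex of the cycle structure), verifying $A$ is independent and $J-A$ is acyclic. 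For the singleton and pair cases I would add the required $X_p$'s to $A$, check they are pairwise non-adjacent (true since $X_1,X_2,X_3$ are pairwise non-adjacent in $J$), check they are non-adjacent to the chosen $Y$-vertices in $A$ (choosing the $Y$-vertices to avoid the neighbours of the $X_p$'s in question), and re-verify acyclicity of $B$.

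The main obstacle I anticipate is the pair cases, particularly getting $B$ to be acyclic once two of the three hubs are removed together with only an independent (hence small) set of $Y$-vertices: removing $\{X_1,X_2\}$, say, still leaves the cycle $Y_4Y_5Y_7Y_8Y_6Y_4$ intact, so $A$ must additionally contain at least one $Y_i$ from that cycle, and this $Y_i$ must be non-adjacent to both chosen hub vertices — so I need to check that for each pair of hubs there is indeed a cycle-breaking $Y$-vertex outside their combined neighbourhood. Reading the edge list, $X_1$ is adjacent to $Y_4,Y_6$; $X_2$ to $Y_5,Y_8$; $X_3$ to $Y_6,Y_7$. So for the pair $\{X_1,X_2\}$ the forbidden $Y$-vertices are $\{Y_4,Y_5,Y_6,Y_8\}$, leaving $Y_7$, which does lie on the cycle and whose deletion (together with $X_1,X_2$) one checks leaves a forest — and similarly $\{X_1,X_3\}$ leaves $Y_5$ or $Y_8$ available, and $\{X_2,X_3\}$ leaves $Y_4$ available. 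Once these availabilities are confirmed, each case reduces to a routine "is this eight-or-fewer-vertex graph a forest" check, so I would present the seven decompositions in a compact table or list and verify independence and acyclicity for each, leaning on the fact that the relevant subgraphs are tiny.
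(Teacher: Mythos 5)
Your approach is essentially the paper's: an explicit finite case check exhibiting a decomposition for each admissible $X$, which the paper compresses via the $X_2$/$X_3$ automorphism into just two witnesses ($A=X\cup\{Y_6,Y_7\}$ for $X\subseteq\{X_1,X_2\}$ and $A=\{X_2,X_3,Y_4\}$ for the remaining pair), and the concrete choices you identify ($Y_7$ for $\{X_1,X_2\}$, $Y_5$ or $Y_8$ for $\{X_1,X_3\}$, $Y_4$ for $\{X_2,X_3\}$, $\{Y_4,Y_7\}$ for $X=\emptyset$) all do yield valid near-bipartite decompositions. One correction: $J$ has no edge $X_1Y_6$ (the figure's edge is $Y_6Y_4$, so $X_1$ has degree~$1$, is on no cycle, and in particular no single vertex of $J$ is a feedback vertex); this misreading only makes your ``forbidden $Y$-vertices'' analysis over-cautious and does not invalidate any of the sets you propose.
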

\begin{proof}
Noting the automorphic equivalence of~$X_2$ and~$X_3$, it is sufficient to consider the following two cases.
If~$X$ is a subset of $\{X_1, X_2\}$, let $A=X \cup \{Y_6,Y_7\}$.
If $X = \{X_2,X_3\}$, let $A= \{X_2,X_3,Y_4\}$.\qed
\end{proof}
\begin{sloppypar}
Notice that there is no near-bipartite decomposition of~$J$ with $\{X_1,X_2,X_3\} \subseteq A$.
Combined with the above lemma, this gives an idea of how this will be used later.
The vertices~$X_1$,~$X_2$ and~$X_3$ will represent literals in a clause of an instance of {\sc $3$-Sat} and membership of~$A$ will indicate that a literal is false: thus~$A$ can be extended to a near-bipartite decomposition except when every literal is false.
(In~\cite{MS16}, a weaker result was shown: one can always find a $3$-colouring of~$J$ such that members of a chosen \emph{proper} subset of $\{X_1, X_2, X_3\}$ belong to the same class and excluded members do not belong to that class.)
\end{sloppypar}

Let~$\phi$ be an instance of {\sc $3$-Sat} with~$m$ clauses $C_1, \ldots, C_m$ and~$n$ variables $v_1,\ldots,v_n$.
We may assume that each clause has three distinct literals.
For a clause~$C_k$ in~$\phi$, we describe a \emph{clause graph}~${\mathcal C}^k$, illustrated within Figure~\ref{fig:hphi}.
We think of~${\mathcal C}^k$ as an array of $n+5m+1$ rows and eight columns.
In each row except the last, every (row,column) position contains exactly two vertices, which we refer to as the \emph{true vertex} and the \emph{false vertex}, and we say that these two vertices are \emph{mates}.
The first~$n$ rows form the \emph{variable block} of the graph and we think of row~$i$ as representing the variable~$v_i$.
The next~$5m$ rows are made up of~$m$ \emph{clause blocks}~${\mathcal C}^{k,1}, {\mathcal C}^{k,2}, \ldots, {\mathcal C}^{k,m}$, each of five rows.
Every true vertex of the variable and clause blocks is joined by an edge to every false vertex in the same row except its mate.
Hence the vertices of each row induce a complete bipartite graph minus a matching.
In the final row, each column contains a single vertex, and each of these vertices is joined by an edge to every other vertex in the same column.
We call this row the \emph{dominating block}.
We complete the definition of the clause graph by describing how we add further edges so that it contains the constraint graph~$J$ as an induced subgraph.
Let the literals of~$C_k$ be $x_{\ell_1},x_{\ell_2},x_{\ell_3}$.
We choose vertices from the first three columns of the variable block of~${\mathcal C}^k$ that we will denote $X_1^k, X_2^k, X_3^k$ to represent the literals.
If~$x_{\ell_p}$ is the variable~$v_i$, then we choose as~$X_p^k$ a vertex from row~$i$ and column~$p$, and choose the true vertex if the literal is positive and the false vertex if the literal is a negated variable.
For $p \in \{4,\ldots,8\}$, let~$Y_p^k$ be the true vertex from the $(p-3)$th row and $p$th column of the clause block~${\mathcal C}^{k,k}$.
Finally add the ten edges
$\{ X_1^kY_4^k, X_2^kY_5^k, X_2^kY_8^k, X_3^kY_6^k, X_3^kY_7^k, Y_4^kY_5^k, Y_4^kY_6^k, Y_5^kY_7^k, Y_6^kY_8^k, Y_7^kY_8^k\}$
so that $\{X_1^k,X_2^k,X_3^k, Y_4^k, Y_5^k,Y_6^k,Y_7^k,Y_6^k\}$ induces the constraint graph~$J$.

\usetikzlibrary{decorations.pathreplacing,angles,quotes}

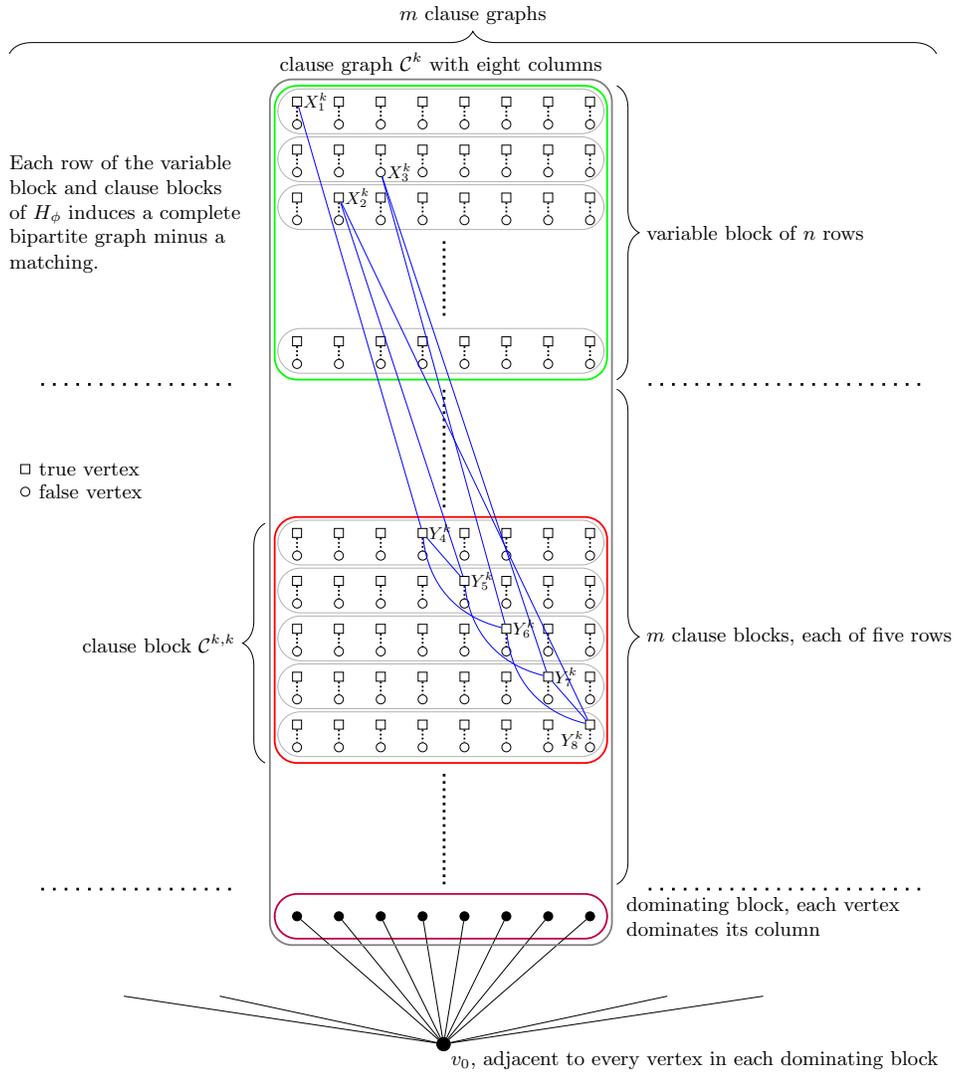
\begin{figure}
\begin{center}
\tikzstyle{vertex}=[circle,draw=black, fill=black, minimum size=5pt, inner sep=1pt]
\tikzstyle{edge} =[draw,black]
\scalebox{0.85}{

\begin{tikzpicture}[scale=0.5]

\begin{scope}[yshift=15cm]
\foreach \y in {2,5,6,7}{
\node[draw, rectangle, minimum size=4pt, inner sep=0pt] (a1\y) at (3,1.5*\y) {};
\node[draw, circle, minimum size=4pt, inner sep=0pt, below =of a1\y, yshift=0.8cm] (b1\y) {};
\path[edge, black,  thick, densely dotted] (a1\y) --  (b1\y);
\foreach \x [remember=\x as \lastx (initially 1)] in {2,...,8}{
    \node [draw, rectangle, minimum size=4pt, inner sep=0pt, right =of a\lastx\y, xshift=-0.5cm] (a\x\y) {};
    \node [draw, circle, minimum size=4pt, inner sep=0pt, right =of b\lastx\y, xshift=-0.5cm] (b\x\y) {};
       \path[edge, black,  thick, densely dotted] (a\x\y) --  (b\x\y);
}
\draw[rounded corners=10pt, thin, black!30!white] (2.4,1.5*\y-1) rectangle (12.6,1.5*\y+0.4);
}
\path[dotted, draw, very thick] (7.6,3.8) -- (7.6,6.2);
\draw[rounded corners=10pt, thick, green] (2.3,1.8) rectangle (12.7,11);
\draw[decoration={brace,mirror,amplitude=10pt},decorate]
  (13,1.8) -- node[right=10pt] {variable block of~$n$ rows} (13,11);
\end{scope}

\begin{scope}[yshift=4.5cm]
\foreach \y in {1,...,5}{
\node[draw, rectangle, minimum size=4pt, inner sep=0pt] (t1\y) at (3,1.5*\y) {};
\node[draw, circle, minimum size=4pt, inner sep=0pt, below =of t1\y, yshift=0.8cm] (f1\y) {};
\path[edge, black,  thick, densely dotted] (t1\y) --  (f1\y);
\foreach \x [remember=\x as \lastx (initially 1)] in {2,...,8}{
    \node [draw, rectangle, minimum size=4pt, inner sep=0pt, right =of t\lastx\y, xshift=-0.5cm] (t\x\y) {};
    \node [draw, circle, minimum size=4pt, inner sep=0pt, right =of f\lastx\y, xshift=-0.5cm] (f\x\y) {};
       \path[edge, black,  thick, densely dotted] (t\x\y) --  (f\x\y);
}
\draw[rounded corners=10pt, thin, black!30!white] (2.4,1.5*\y-1) rectangle (12.6,1.5*\y+0.4);
}
\draw[rounded corners=10pt, thick, red] (2.3,0.3) rectangle (12.7,8);
\draw[decoration={brace,amplitude=10pt},decorate]
  (2,0.3) -- node[left=10pt] {clause block~${\mathcal C}^{k,k}$} (2,7.8);
\end{scope}

\path[dotted, draw, very thick] (7.6,12.8) -- (7.6,16.6);
\path[dotted, draw, very thick] (7.6,1) -- (7.6,4.5);
\draw[decoration={brace,mirror,amplitude=10pt},decorate]
  (13,1) -- node[right=10pt] {$m$ clause blocks, each of five rows} (13,16.5);

\begin{scope}[yshift=0cm]
\node[draw, circle, fill=black, minimum size=4pt, inner sep=0pt] (d1) at (3,0) {};
\foreach \x [remember=\x as \lastx (initially 1)] in {2,...,8}{
    \node [draw, circle, fill=black, minimum size=4pt, inner sep=0pt, right =of d\lastx, xshift=-0.5cm] (d\x) {};
}
\draw[rounded corners=10pt, thick, purple] (2.3,-0.7) rectangle (12.7,0.7);
\node [text width=5cm] at (18.3,0) {dominating block, each vertex dominates its column};

\draw[decoration={brace,amplitude=10pt},decorate]
  (-6,27) -- node[above=10pt] {$m$ clause graphs} (23,27);

    \node [draw, circle, fill=black, minimum size=6pt, inner sep=0pt, yshift=-2cm] (v0) at ($(d1)!0.50!(d8)$) {};
       \node [below right=0cm, align=center] at (v0) {$v_0$, adjacent to every vertex in each dominating block};

\foreach \dest  in {d1,d2,d3,d4,d5,d6,d7,d8}
       \path[edge, black] (v0) --  (\dest);

\foreach \x in {-10,-7,7,10}
       \path[edge, black] (v0) -- ($(v0)+(\x,1.5)$);

\end{scope}

\draw[rounded corners=10pt, thick, black!50!white] (2.15,-0.9) rectangle (12.85,26.2);
\node at (7.5,26.7) {clause graph~${\mathcal C}^k$ with eight columns};

\path[loosely dotted, draw, very thick] (14,0.85) -- (22.5,0.85);
\path[loosely dotted, draw, very thick] (-5,0.85) -- (1,0.85);
\path[loosely dotted, draw, very thick] (14,16.65) -- (22.5,16.65);
\path[loosely dotted, draw, very thick] (-5,16.65) -- (1,16.65);

\foreach \source/ \dest  in {
a17/t45,t45/t54,t72/t81,a25/t54,a25/t81,b36/t63,b36/t72}
       \path[edge, blue] (\source) --  (\dest);
\path [edge, blue] (t45) to [bend right=35] (t63);  
\path [edge, blue] (t54) to [bend right=35] (t72);  
\path [edge, blue] (t63) to [bend right=35] (t81);  

\node [right=0.0cm, scale=0.85, align=center] at (a17) {\footnotesize{$X_1^k$}};
\node [right=0.0cm, scale=0.85, align=center] at (a25) {\footnotesize{$X_2^k$}};
\node [right=0.0cm, scale=0.85, align=center] at (b36) {\footnotesize{$X_3^k$}};
\node [right=0.0cm, scale=0.85, align=center] at (t45) {\footnotesize{$Y_4^k$}};
\node [right=0.0cm, scale=0.85, align=center] at (t54) {\footnotesize{$Y_5^k$}};
\node [right=0.0cm, scale=0.85, align=center] at (t63) {\footnotesize{$Y_6^k$}};
\node [right=0.0cm, scale=0.85, align=center] at (t72) {\footnotesize{$Y_7^k$}};
\node [below left=0.0cm, scale=0.85, align=center] at (t81) {\footnotesize{$Y_8^k$}};

\begin{scope}[yshift=21cm,xshift=19.5cm]
\node[draw, rectangle, minimum size=4pt, inner sep=0pt] (true) at (-25,-7) {};
\node[draw, circle, minimum size=4pt, inner sep=0pt, below =of true, yshift=0.8cm] (false) {};
\node [right=0.1cm, align=center] at (true) {true vertex};
\node [right=0.1cm, align=center] at (false) {false vertex};
\end{scope}

\node [text width=4cm] at (-2,22) {Each row of the variable block and clause blocks of~$H_\phi$ induces a complete bipartite graph minus a matching.};

\end{tikzpicture}
}
\end{center}
\caption{The graph~$H_\phi$ with the focus on one of the constituent clause graphs~${\mathcal C}^k$, where $C_k=\{v_1 \vee v_3 \vee \overline{v_2}\}$.
The blue edges illustrate the induced constraint graph.}\label{fig:hphi}
\end{figure}

We now define the graph~$H_\phi$.
It contains:
\begin{itemize}
\item the disjoint union of clause graphs~${\mathcal C}^k$, $1 \leq k \leq m$ (we think of the clause graphs as being arranged side-by-side, so that they form an array of $n+5m+1$ rows and~$8m$ columns),
\item edges from each true vertex of each clause graph to each false vertex in the same row of other clause graphs, and
\item an additional vertex~$v_0$ joined to each vertex in the dominating block of each clause graph.
\end{itemize}

Note that each column of~$H_\phi$ contains exactly one vertex that is in a constraint graph~$J$ and the only rows that contain more than one such vertex are those in the variable block.

\begin{lemma}[\protect{\cite[Lemma~2]{MS16}}]
For an instance~$\phi$ of {\sc $3$-Sat}, $H_\phi$ has diameter~$3$.
\end{lemma}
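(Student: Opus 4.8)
The plan is to prove the statement by a direct case analysis on pairs of vertices of $H_\phi$, showing that every pair is joined by a path of length at most $3$; the lower bound $\mathrm{diam}(H_\phi)\ge 3$ is easy and may also be quoted from \cite{MS16}. Three structural facts drive everything. First, $v_0$ is adjacent to every vertex of every dominating block, and each dominating-block vertex is adjacent to every other vertex in its column of its clause graph; consequently $v_0$ reaches any dominating-block vertex in one step and any variable/clause-block vertex in exactly two steps, i.e.\ $v_0$ has eccentricity $2$. Second, within a single row of a variable or clause block the true and false vertices induce a complete bipartite graph minus the mate-matching, and a true vertex of one clause graph is adjacent to \emph{every} false vertex in the same row of every \emph{other} clause graph; so ``horizontal'' moves inside a row are cheap. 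Third, the only edges of $H_\phi$ joining two distinct non-dominating rows are the ten constraint-graph edges per clause graph, so ``vertical'' moves essentially must go through a dominating-block vertex or through $v_0$.

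Using the first fact, the cases involving $v_0$ or two dominating-block vertices are immediate: $v_0$ is within distance $2$ of everything, and two dominating-block vertices are joined through $v_0$ in at most two steps. If one vertex is a dominating-block vertex $d$ (in column $c$ of clause graph $k$) and the other is a block vertex $w$, then either $w$ lies in column $c$ of clause graph $k$ and $d\sim w$, or $d - v_0 - d' - w$ has length $3$, where $d'$ is the dominating-block vertex of $w$'s column.

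The substantive part concerns two variable/clause-block vertices $u$ and $w$; write $u$ for the vertex in row $r_1$, column $c_1$, clause graph $k_1$ with type $\tau_1\in\{\text{true},\text{false}\}$, and $w$ in row $r_2$, column $c_2$, clause graph $k_2$ with type $\tau_2$. If $r_1=r_2$, the second fact handles it: if $\tau_1\ne\tau_2$ and $u,w$ are not mates they are adjacent; if they are mates (so $c_1=c_2$, $k_1=k_2$) then $u - d - w$ works, where $d$ is that column's dominating-block vertex; and if $\tau_1=\tau_2$ then, choosing any vertex $z$ of the opposite type in the same row but in a column different from both $c_1$ and $c_2$ (each clause graph has eight columns, so $z$ exists), $u - z - w$ has length $2$. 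If $r_1\ne r_2$ and $(c_1,k_1)=(c_2,k_2)$, then $u$ and $w$ share a column of a clause graph and $u - d - w$ again has length $2$. Finally, if $r_1\ne r_2$ and $(c_1,k_1)\ne(c_2,k_2)$, I would let $d$ be the dominating-block vertex of column $c_1$ in clause graph $k_1$ and let $y$ be the vertex in row $r_2$, column $c_1$, clause graph $k_1$ of type opposite to $\tau_2$; then $u\sim d$ and $d\sim y$ (both $u$ and $y$ lie in column $c_1$ of clause graph $k_1$), and $y\sim w$ because $y$ and $w$ lie in the common row $r_2$ with opposite types and are not mates (either $k_1\ne k_2$, so the inter-clause-graph true--false edges apply, or $k_1=k_2$ but $c_1\ne c_2$), yielding a path $u - d - y - w$ of length $3$.

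I expect the last case to be the main obstacle: the sparseness of edges between distinct non-dominating rows means the naive detour $u - d - v_0 - d' - w$ only gives length $4$, and one must observe that a \emph{single} dominating-block vertex together with the rich adjacency inside the target row (crucially, the cross-clause-graph true--false edges) already shortens this to $3$; getting the ``mate'' bookkeeping right in the sub-cases $k_1=k_2$ versus $k_1\ne k_2$ is where care is required. For the lower bound it suffices to exhibit two block vertices in distinct rows and distinct columns of the same clause graph and check they have no common neighbour, so that $\mathrm{diam}(H_\phi)=3$ exactly.
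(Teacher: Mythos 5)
Your proof is correct, but note that the paper does not actually prove this lemma at all: it is imported verbatim from~\cite[Lemma~2]{MS16}, with only a one-sentence remark that the cited lemma establishes the bound for a spanning subgraph of~$H_\phi$ (which suffices for the upper bound) and that the lower bound of~$3$ is easy to see. Your self-contained case analysis is therefore a genuinely different route, and I checked it against the construction: the eccentricity-$2$ claim for~$v_0$ is right (every block vertex is adjacent to its column's dominating vertex, which is adjacent to~$v_0$); the same-row cases are handled correctly, including the mate subtlety (mates share a column \emph{and} a clause graph, so a vertex of opposite type at any other (column, clause graph) position serves as a common neighbour, and eight columns guarantee one exists); and the crucial last case $r_1\ne r_2$, $(c_1,k_1)\ne(c_2,k_2)$ works because the vertex~$y$ in row~$r_2$, position $(c_1,k_1)$, of type opposite to~$\tau_2$ is not the mate of~$w$ and hence adjacent to it whether $k_1=k_2$ or not, giving the path $u-d-y-w$. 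Your observation that a naive detour through~$v_0$ only yields length~$4$, so one must exploit the cross-clause-graph true--false edges inside the target row, is exactly the point that makes the bound tight. What the paper's approach buys is brevity and explicit credit to~\cite{MS16}; what yours buys is a verifiable argument for the exact graph~$H_\phi$ rather than a spanning subgraph. For the lower bound, your suggestion to pick two block vertices in distinct rows and distinct columns of one clause graph needs the small caveat that they should be chosen outside the embedded constraint graphs (otherwise the ten extra edges can create a common neighbour, e.g.\ $X_1^k$ and $Y_5^k$ share the neighbour $Y_4^k$); with that restriction the two neighbourhoods are disjoint and the argument goes through.
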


Note that in~\cite{MS16}, Lemma~2 proves the bound on the diameter for a graph that is a spanning subgraph of~$H_\phi$ which is, of course, sufficient for an upper bound for the diameter of~$H_\phi$ and it is easy to see that the diameter is not less than~$3$.
We note also that~$H_\phi$ does not contain any triangles or any vertices that are siblings (two vertices are siblings if the neighbourhood of one is a subset of the neighbourhood of the other) so {\sc Near-Bipartiteness} is also \NP-complete for such instances.

\begin{theorem}\label{thm:diam3}
{\sc Near-Bipartiteness} is \NP-complete for graphs of diameter at most~$3$.
\end{theorem}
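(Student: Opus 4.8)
The plan is to prove \NP-completeness in the usual two steps. Membership in \NP{} is immediate, since a near-bipartite decomposition is a polynomial-size certificate that can be checked in polynomial time (verify that one part is independent and the other induces a forest). For \NP-hardness I would reduce from {\sc $3$-Satisfiability}: given an instance~$\phi$, build the graph~$H_\phi$ defined above --- this takes polynomial time, and by the preceding lemma, $H_\phi$ has diameter~$3$ --- and show that $H_\phi$ is near-bipartite if and only if~$\phi$ is satisfiable. One direction is free: every near-bipartite graph is $3$-colourable (given a near-bipartite decomposition $(A,B)$, the set~$A$ together with the two sides of the bipartite forest $G[B]$ is a proper $3$-colouring), so a near-bipartite~$H_\phi$ is in particular $3$-colourable, and Mertzios and Spirakis~\cite{MS16} already proved that $H_\phi$ is $3$-colourable only if~$\phi$ is satisfiable.

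The substance lies in the converse. From a satisfying assignment~$\tau$ I would build an explicit near-bipartite decomposition $(A,B)$ of~$H_\phi$. For each variable~$v_i$, put into~$A$ the whole ``false side'' of row~$i$ (every false vertex of that row, across all clause graphs) if $\tau(v_i)$ is true, and the whole ``true side'' of row~$i$ otherwise; this is exactly the choice under which each literal vertex~$X_p^k$ lands in~$A$ precisely when the literal it represents is false under~$\tau$. Since~$\tau$ satisfies clause~$C_k$, at most two of $X_1^k,X_2^k,X_3^k$ lie in~$A$, so Lemma~\ref{lem:constraint-graph} supplies a near-bipartite decomposition of the induced constraint graph~$J^k$ consistent with this, which we use to assign each $Y_q^k$ to~$A$ or to~$B$. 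Each clause-block row contains exactly one such $Y$-vertex, so extend~$A$ over the remaining clause-block vertices by again placing one full ``side'' of each such row into~$A$, choosing the side that contains that row's $Y$-vertex precisely when that vertex has been placed in~$A$. Finally put~$v_0$ into~$A$ and every dominating-block vertex into~$B$.

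It then remains to verify the two defining properties, and I expect showing that $G[B]$ is a forest to be the main obstacle. Independence of~$A$ should be routine: within any row~$A$ meets only one of the two sides, which is an independent set even once the cross-clause-graph edges (which join only true vertices to false vertices) are included; $v_0$ has all of its neighbours in~$B$; and within each~$J^k$ independence holds by Lemma~\ref{lem:constraint-graph}. For $G[B]$ the key observations are that $B$ meets every variable row and every clause-block row in an independent set, meets the dominating block in an independent set, and meets each~$J^k$ in a forest --- here using that $J^k$ is an \emph{induced} subgraph, so that the subgraph of~$H_\phi$ induced by $B\cap V(J^k)$ coincides with $J^k[B\cap V(J^k)]$. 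The only edges of $G[B]$ joining two of these pieces are the constraint edges inside some~$J^k$ and the ``column'' edges at the dominating vertices. Since every non-dominating vertex lies in exactly one column, and so has a unique dominating neighbour, while each column contains exactly one vertex of a constraint graph, the column neighbours of any single dominating vertex lie in pairwise distinct components of the forest built so far; hence adding the dominating vertices one by one preserves acyclicity, and because $v_0\notin B$ --- precisely why we put $v_0$ in~$A$ --- no cycle is created through~$v_0$ either. A careful case analysis along these lines then confirms that $G[B]$ is a forest, which together with the first direction proves the theorem.
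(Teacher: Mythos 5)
Your proposal is correct and takes essentially the same approach as the paper: the same reduction via~$H_\phi$, the same explicit decomposition built from a satisfying assignment using Lemma~\ref{lem:constraint-graph} (with one side of each row placed in~$A$, $v_0$ in~$A$ and the dominating blocks in~$B$), the same observation confining any potential cycle of~$G[B]$ to a single constraint graph, and the same appeal to~\cite{MS16} for the converse via $3$-colourability. No gaps.
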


\begin{proof}
We prove that {\sc $3$-Sat} can be polynomially-reduced to {\sc Near-Bipartiteness} by showing that~$\phi$ is satisfiable if and only if~$H_\phi$ has a near-bipartite decomposition~$(A,B)$.

\medskip
\noindent
($\Rightarrow$) Suppose that~$\phi$ has a satisfying assignment.
Let~$v_0$ be in~$A$, and let the vertices of all the dominating blocks be in~$B$.
If the variable~$v_i$ is true, then let~$B$ contain all the true vertices of row~$i$ of the variable blocks of each clause graph.
Otherwise let~$B$ contain the false vertices.
In each case, let~$A$ contain the mates of these vertices.
Consider the constraint graph that is an induced subgraph of each clause graph.
The vertices~$X_1$, $X_2$ and~$X_3$ have been assigned to either~$A$ or~$B$ with at most two, representing false literals, belonging to~$A$.
By Lemma~\ref{lem:constraint-graph}, we can assign the remaining vertices of the subgraph (which are all true vertices of clause blocks) to~$A$ and~$B$ such that on the subgraph they form a near-bipartite decomposition.
When we assign a true vertex of a clause block to~$A$ or~$B$, we assign all other true vertices in the same row of~$H_\phi$ to the same set and assign their mates to the other set.
As each row of the clause blocks contains only one vertex in a constraint graph, this process assigns every vertex in~$H_\phi$ to exactly one of~$A$ and~$B$, and we have assigned every vertex of~$H_\phi$ to~$A$ or~$B$.

It is immediately clear that~$A$ is an independent set.
We must show that~$B$ contains no cycles.
We know that~$B$ contains all the vertices of the dominating blocks and, in each row, either all the true vertices or all the false vertices.
Thus if~$B$ contains a cycle then all the vertices of the cycle belong to the same clause graph (the only edges going between distinct clause graphs are those joining true vertices to false vertices in the same row).
Let~$G_B$ be a subgraph of a clause graph induced by vertices of~$B$.
Then each true and false vertex not in the constraint graph has degree~$1$ (due to the edge joining it to the dominating block), and each vertex in the dominating block has at most one neighbour with degree more than~$1$ (since it only has one neighbour in the constraint graph).
Thus if~$G_B$ contains a cycle then it belongs to the constraint graph, contradicting how~$A$ and~$B$ were chosen.

\medskip
\noindent
($\Leftarrow$) Suppose~$A$ and~$B$ form a near-bipartite decomposition of~$H_\phi$.
Then~$B$ can be decomposed into two independent sets, and these, along with~$A$, can be considered a $3$-colouring.
In~\cite[Theorem~5]{MS16}, it is shown that if~$H_\phi$ has a $3$-colouring, then~$\phi$ is satisfiable.\qed
\end{proof}

\section{Conclusions}\label{s-con}

We completed the computational complexity classifications of {\sc Near-Bipartiteness} and {\sc Independent Feedback Vertex Set} for graphs of diameter~$k$ for every integer $k\geq 0$.
We showed that the complexity of both problems jumps from being polynomial-time solvable to \NP-complete when~$k$ changes from~$2$ to~$3$.

We recall that near-bipartite graphs are $3$-colourable.
Interestingly, the complexity of {\sc $3$-Colouring} for graphs of diameter~$k$ has not yet been settled, as there is one remaining case left, namely when $k=2$.
This is a notorious open problem, which has been frequently posed in the literature (see, for example,~\cite{BKM12,BFGP13,MS16,Pa15}).
We note that the approach of solving {\sc Near-Bipartiteness} and {\sc Independent Feedback Vertex Set} for graphs of diameter~$2$ does not work for {\sc $3$-Colouring}.
For instance, we cannot bound the size of the set~$X$ in Lemma~\ref{l-2} if we drop the condition that the union of two colour classes must induce a forest.

\bibliography{mybib}

\begin{thebibliography}{10}

\bibitem{AGSS16}
A.~Agrawal, S.~Gupta, S.~Saurabh, and R.~Sharma.
\newblock Improved algorithms and combinatorial bounds for independent feedback
  vertex set.
\newblock {\em Proc. IPEC 2016, LIPIcs}, 63:2:1--2:14, 2017.

\bibitem{BKM12}
M.~Bodirsky, J.~K\'ara, and B.~Martin.
\newblock The complexity of surjective homomorphism problems\ {--}\ a survey.
\newblock {\em Discrete Applied Mathematics}, 160(12):1680--1690, 2012.

\bibitem{BDFJP17b}
M.~Bonamy, K.~K. Dabrowski, C.~Feghali, M.~Johnson, and D.~Paulusma.
\newblock Independent feedback vertex set for ${P_5}$-free graphs.
\newblock Manuscript, 2017.

\bibitem{BDFJP17}
M.~Bonamy, K.~K. Dabrowski, C.~Feghali, M.~Johnson, and D.~Paulusma.
\newblock Recognizing graphs close to bipartite graphs.
\newblock {\em Proc. MFCS 2017, LIPIcs}, 83:70:1--70:14, 2017.

\bibitem{BBKNP13}
A.~Brandst{\"a}dt, S.~Brito, S.~Klein, L.~T. Nogueira, and F.~Protti.
\newblock Cycle transversals in perfect graphs and cographs.
\newblock {\em Theoretical Computer Science}, 469:15--23, 2013.

\bibitem{BLS98}
A.~Brandst{\"a}dt, V.~B. Le, and T.~Szymczak.
\newblock The complexity of some problems related to graph 3-colorability.
\newblock {\em Discrete Applied Mathematics}, 89(1--3):59--73, 1998.

\bibitem{BFGP13}
H.~Broersma, F.~V. Fomin, P.~A. Golovach, and D.~Paulusma.
\newblock Three complexity results on coloring {$P_k$}-free graphs.
\newblock {\em European Journal of Combinatorics}, 34(3):609--619, 2013.

\bibitem{CL95}
P.~A. Catlin and H.-J. Lai.
\newblock Vertex arboricity and maximum degree.
\newblock {\em Discrete Mathematics}, 141(1--3):37--46, 1995.

\bibitem{DMP16}
F.~Dross, M.~Montassier, and A.~Pinlou.
\newblock Partitioning a triangle-free planar graph into a forest and a forest
  of bounded degree.
\newblock {\em European Journal of Combinatorics}, (in press).

\bibitem{GLS84}
M.~Gr\"otschel, L.~Lov\'asz, and A.~Schrijver.
\newblock Polynomial algorithms for perfect graphs.
\newblock {\em Annals of Discrete Mathematics}, 21:325--356, 1984.

\bibitem{Lo73}
L.~Lov\'asz.
\newblock Coverings and coloring of hypergraphs.
\newblock {\em Congressus Numerantium}, VIII:3--12, 1973.

\bibitem{MS16}
G.~B. Mertzios and P.~G. Spirakis.
\newblock Algorithms and almost tight results for $3$-colorability of small
  diameter graphs.
\newblock {\em Algorithmica}, 74(1):385--414, 2016.

\bibitem{MPRS12}
N.~Misra, G.~Philip, V.~Raman, and S.~Saurabh.
\newblock On parameterized independent feedback vertex set.
\newblock {\em Theoretical Computer Science}, 461:65--75, 2012.

\bibitem{Pa15}
D.~Paulusma.
\newblock Open problems on graph coloring for special graph classes.
\newblock {\em Proc. WG 2015, LNCS}, 9224:16--30, 2016.

\bibitem{TIZ15}
Y.~Tamura, T.~Ito, and X.~Zhou.
\newblock Algorithms for the independent feedback vertex set problem.
\newblock {\em {IEICE} Transactions on Fundamentals of Electronics,
  Communications and Computer Sciences}, E98-A(6):1179--1188, 2015.

\bibitem{YY06}
A.~Yang and J.~Yuan.
\newblock Partition the vertices of a graph into one independent set and one
  acyclic set.
\newblock {\em Discrete Mathematics}, 306(12):1207--1216, 2006.

\end{thebibliography}

\end{document}